\newtheorem{theorem}{Theorem}
\newcommand{\Aut}{\ensuremath{\mathrm{Aut}}}
\newcommand{\F}{\ensuremath{\mathbb{F}}}
\begin{document}

\title{Two Optimal One-Error-Correcting Codes 
of Length $13$ That Are Not Doubly Shortened Perfect 
Codes}

\author{Patric~R.~J.~\"Osterg{\aa}rd and Olli Pottonen%
\thanks{Olli Pottonen has been supported by the Graduate
School in Electronics, Telecommunication and Automation
and by the Nokia Foundation.
This work has also been supported in part by the
Academy of Finland, Grants no.\ 110196 and 130142.}\\
Department of Communications and Networking\\
Helsinki University of Technology TKK\\
P.O.\ Box 3000, 02015 TKK, Finland\\
{\tt patric.ostergard@tkk.fi, olli.pottonen@tkk.fi}
}

\maketitle

\begin{abstract}
The doubly shortened perfect codes of length 13
are classified utilizing the classification of
perfect codes in [P.R.J.\ \"Osterg{\aa}rd and
O. Pottonen, The perfect binary one-error-correcting 
codes of length 15: Part I---Classification, 
\emph{IEEE Trans.\ Inform.\ Theory}, to appear]; 
there are 117821 such (13,512,3) codes.
By applying a switching operation to those codes, 
two more (13,512,3) 
codes are obtained, which are then not doubly shortened
perfect codes.
\end{abstract}

\section{Introduction}

A \emph{binary code} of length $n$ is a subset of $\F_2^n$, where
$\F_2 = \{0, 1\}$ is the field of two elements. All codes in this
paper are binary. The \emph{Hamming distance} between two codewords 
is the number of coordinates in which they differ.  The \emph{minimum
distance} of a code is the minimum Hamming distance between any two
distinct codewords.  An $(n, M, d)$ code has length $n$, cardinality
$M$ and minimum distance $d$. The function $A(n,d)$ gives the
the maximum integer $M$ for which an $(n,M,d)$ code exists.
An $(n,A(n,d),d)$ code is called \emph{optimal}.

For a code with minimum distance $d$, the balls of radius $r = \lfloor
(d-1)/2 \rfloor$ centered around the codewords are nonintersecting and
such a code is called an \emph{$r$-error-correcting code.} If the
balls cover the entire ambient space, then the code---which is
obviously optimal---is called \emph{perfect}, or \emph{$r$-perfect}. 
All $1$-perfect binary codes
have parameters $(2^m-1, 2^{2^m-m-1}, 3)$ with $m$ arbitrary; linear
such codes are known as \emph{Hamming codes}.  
 
The $1$-perfect binary codes of length $15$ were
recently classified~\cite{prfcodes}. There are 5983 such codes,
up to equivalence. Two codes are \emph{equivalent} if one can
be obtained from the other by adding a vector ${\bf x}$ to all 
codewords and letting a permutation $\pi$ act on the coordinates. 
The \emph{automorphism group} of a code consists
of all such mappings $\pi{\bf x}$ from the code onto 
itself.

\emph{Shortening} a code is the process of removing a coordinate and all
codewords that did not have a specified value in that coordinate.
Best and Brouwer \cite{BB77} used linear programming to prove that 
shortening any $1$-perfect binary code $i$ times with $1 \leq i \leq 3$
yields an optimal $(2^m-1-i, 2^{2^m-m-1-i},3)$ code. But can one
obtain \emph{all} optimal codes
with those parameters, up to equivalence, by 
shortening in that manner?
Etzion and Vardy \cite{EV} asked this question, and
Blackmore \cite{Blackmore99} gave an affirmative answer to the 
question for $i=1$. The main result of the current work is a
negative answer to the question for $i=2$.

The main result of the paper relies on a technique 
for transforming one error-correcting code into another;
this technique, called switching, is considered in
Section \ref{switch}. The classification of doubly shortened
1-perfect codes of length 15 and the result obtained by 
applying switching to these---another two optimal codes---is
discussed in Section \ref{result}. The paper is concluded
in Section \ref{con}.

\section{Switching}

\label{switch}
Switching as a general framework comprises local transformations
of combinatorial structures that keep some of the main parameters of
the structure unchanged. For example, a \mbox{2-switch} of a graph does
not change the degrees of the vertices \cite[p.\ 46]{W}.
In coding theory, switching has in particular been used
to construct new perfect codes from old ones \cite{S}. 
We shall here see
that as a code switch maintains the minimum distance of the
code, it is applicable to any error-correcting codes, not just
perfect ones. The possibility of using switching more generally
for error-correcting codes might seem obvious, but we have not
encountered any related comments in the literature so we include
a comprehensive treatment here.

To switch a binary code with minimum distance $d$, one picks a 
coordinate and forms a graph with one vertex for each codeword 
and an edge between two codewords that are at distance $d$ from
each other \emph{and} that differ in the particularized coordinate.
We call the auxiliary graph obtained in this manner a
\emph{switching graph}. \emph{Switching} now means
changing the value of the particularized coordinate in the
codewords of a connected component of the switching graph.

\begin{theorem}
Switching does not reduce minimum distance.
\end{theorem}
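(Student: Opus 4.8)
The plan is to track how switching affects the Hamming distance of each pair of codewords and to show that no pair drops below $d$. Fix the particularized coordinate and the connected component $K$ of the switching graph whose values are flipped, and let $C$ and $C'$ denote the code before and after switching. For two distinct codewords of $C'$, write $u, v$ for their preimages in $C$ and $u', v'$ for the corresponding words in $C'$; I would then split into cases according to which side of the switch $u$ and $v$ lie on.

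First I would dispose of the easy case where $u$ and $v$ lie on the same side, i.e., both in $K$ or both outside $K$. Then the particularized coordinate is flipped in both or in neither, so $d(u', v') = d(u, v) \ge d$ and nothing needs checking. The interesting case is when exactly one of them, say $u$, lies in $K$. Then the particularized coordinate is flipped in $u$ but not in $v$, so the distance changes by exactly one in that coordinate and is unchanged elsewhere. If $u$ and $v$ originally agreed in the particularized coordinate they now differ there, so $d(u', v') = d(u, v) + 1 > d$; the only way the distance can decrease is if they originally differed in that coordinate, giving $d(u', v') = d(u, v) - 1$.

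The crux is to rule out $d(u', v') < d$ in this last subcase, that is, to show $d(u, v) \ge d + 1$ whenever $u$ and $v$ lie on opposite sides of the switch and differ in the particularized coordinate. Suppose instead that $d(u, v) = d$. Since $u$ and $v$ also differ in the particularized coordinate, the definition of the switching graph places an edge between them. But an edge cannot join a vertex of the connected component $K$ to a vertex outside $K$, contradicting $u \in K$ and $v \notin K$. Hence $d(u, v) > d$, so $d(u', v') \ge d$, and in every case the distance stays at least $d$.

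The only place any genuine content enters is this final observation: the edges of the switching graph are defined precisely so as to mark the pairs whose distance could fall below $d$, and a connected component has no edges leaving it. Everything else is routine bookkeeping about flipping a single coordinate. I therefore expect the main (modest) obstacle to be stating the case split cleanly rather than any substantive difficulty, since the whole argument turns on matching the edge condition against the definition of a connected component.
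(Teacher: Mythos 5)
Your proof is correct and rests on exactly the same observation as the paper's: a pair at distance $d$ differing in the particularized coordinate forms an edge of the switching graph, hence lies in a single connected component, so the switch cannot separate them. The only difference is cosmetic---you case-split first on membership in the switched component and argue the bad case by contradiction, while the paper first notes that only pairs originally at distance $d$ can be affected---so the two arguments are essentially identical.
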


\begin{proof}
Since at most one coordinate value is changed in each codeword
and all changes are carried out in the same coordinate, only
the distance between pairs of words that are originally at
distance $d$ from each other can decrease to $d-1$. But such
pairs of codewords either have the same value or
different values in the 
particularized coordinate. In the former case the distance
cannot decrease, and in the latter case the codewords are
adjacent in the switching graph and belong to the same connected
component (so the switch does not affect the distance).
\end{proof}

With a connected switching graph, switching gives just
an equivalent code, but it is not difficult to come up with
sufficient conditions for the switching graph not to be connected.
For example, if $d$ is odd, then two words at odd distance from
each other when ignoring the particularized coordinate do not
belong to the same connected component. This is in fact the reason why
1-perfect binary codes always have at least two components;
see \cite{S} and its references.

For a given code, switching is a tool for obtaining other
codes. By applying switching to a code in all possible ways---with
respect to both picking the particularized coordinate and the 
connected component of the switching graph---and repeatedly
doing the same for the new codes until no further
codes are found, one obtains 
the \emph{switching class} in which a code resides. 
As an example, switching
partitions the 1-perfect binary codes of length 15 into nine
switching classes \cite{OP2}.

\section{Results}

\label{result}

By shortening the 1-perfect binary codes of length 15 twice in
all possible ways and rejecting equivalent codes, one gets a
classification of doubly shortened 1-perfect codes of length 15.
In this manner, we obtained $117821$ inequivalent $(13,512,3)$ codes
from the 5983 1-perfect $(15,2048,3)$ codes. Detecting equivalent
codes was the main challenge in this endeavour; this was done by
computing canonical equivalence class representatives with an
algorithm from~\cite{prfcodes}.

The $117821$ doubly shortened 1-perfect codes are partitioned into 21
switching classes. In the calculation of switching classes, two more
$(13,512,3)$ codes were encountered. Consequently, these are not
doubly shortened 1-perfect codes. They have automorphism groups of
orders 128 and 96, and both reside in the largest of the switching
classes, which contains $115971$ codes. The large automorphism groups
allows succinct description of the codes, which can be found in
Table~\ref{tbl:codes}. The automorphisms are given as permutations
acting on coordinates, and if a coordinate is marked with an overline,
then the value in that coordinate should be flipped before applying
the permutation. The coordinates are numbered from left to right. 
The codes are also available, in non-compressed form, in the arXiv
source of this document.

\begin{table}
\caption{Two $(13,512,3)$ codes}
\label{tbl:codes}
\begin{tabular}{ll}
\hline\noalign{\smallskip}
First code:\\
\noalign{\smallskip}\hline\noalign{\smallskip}
Automorphism group generators: & \hspace*{43mm}\\
$(1\ 3\ 2\ 13)(\overline{4}\ \overline{7}\ \overline{8}\ 9)
(5\ 10\ 6\ \overline{11})$ &

$(\overline{1}\ 3\ \overline{2}\ 13)(\overline{4}\ 8)(\overline{5})
(\overline{6})(10\ \overline{11})(\overline{12})$ \\

$(3\ 13)(\overline{4}\ \overline{9})(5\ 10)(\overline{6}\ \overline{11})
(\overline{7}\ \overline{8})(\overline{12})$ &

$(\overline{3}\ \overline{13})(4\ 10)(\overline{5}\ \overline{9})
(\overline{6}\ \overline{7})(\overline{8}\ \overline{11})(\overline{12})$ \\

\end{tabular}\\
\begin{tabular}{llll}
\multicolumn{4}{l}{Orbit representatives:}\\
$0000000000000$ &
$1000000010100$ &
$1000011001100$ &
$1010010000100$ \\
\end{tabular}\\
\begin{tabular}{ll}
\noalign{\smallskip}\hline\noalign{\smallskip}
Second code:\\
\noalign{\smallskip}\hline\noalign{\smallskip}
Automorphism group generators: & \hspace*{40.5mm} \\
$(\overline{3}\ 7)(\overline{4}\ \overline{13}\ \overline{6}\ \overline{8})
(5\ \overline{11})(\overline{9})(\overline{10})(\overline{12})$ &

$(4\ 6)(\overline{5})(8\ 13)(\overline{9})(10\ 12)(\overline{11})$ \\

$(1\ \overline{7}\ 3)(\overline{2})(4\ \overline{13}\ \overline{10})
(\overline{5}\ \overline{9}\ \overline{11})(6\ \overline{8}\ \overline{12})$

\end{tabular}\\
\begin{tabular}{lll}
\multicolumn{3}{l}{Orbit representatives:}\\
$0000000000000$ &
$1000000111000$ &
$1010100101000$ \\
$0000001101000$ &
$0010101111000$ &
$1000000001010$ \\
\end{tabular}\\
\begin{tabular}{ll}
\noalign{\smallskip}\hline\noalign{\smallskip}
\hspace*{85.5mm}\\
\end{tabular}
\end{table}

As an independent verification of the fact that the two codes
are not doubly shortened 1-perfect codes, we applied the 
algorithm from \cite{prfcodes}---solving instances of the
exact cover problem with the {\sf libexact}
library~\cite{libexact}---for constructing 1-perfect codes 
from partial codes. 

Further shortening of the two codes reveals---with the
computational approach just mentioned---that they lead to
some $(12, 256, 3)$ and $(11, 128, 3)$ codes that are not,
respectively, triply and four times shortened 1-perfect codes of 
length 15, and also to some codes of length 11 that are
four time shortened 1-perfect codes.

The fact that all new codes found by switching are indeed
non-lengthenable increases the credibility of the classification. To
gain even more confidence in it, the consistency of the results was
verified by counting the number of distinct perfect binary codes of
length $15$ in two different ways. Using the orbit-stabilizer theorem
for the classification of these codes, it was concluded that there are
$1\,397\,746\,513\,516\,953\,600$ such codes~\cite{prfcodes}. We also
know that the number of codes is
\begin{equation}\label{eq:count}
\sum_{C \in \mathcal{C}} \frac{13! \cdot 2^{13} \cdot E(C)}{\Aut{C}}
\end{equation}
where $\mathcal{C}$ contains equivalence class representatives of the
twice shortened perfect codes, $E(C)$ is the number of distinct ways
of extending $C$ to a perfect binary code, and $13! \cdot 2^{13}$ is
the order of the acting group. As this formula yielded the expected
result, the computations are most likely correct.

\section{Conclusion}

\label{con}
The current work settles an open problem but leads to some 
natural further questions that we have so far been unable to
answer. Have all $(13, 512, 3)$ codes now
been found? Do the two counterexamples have 
some particular property that easily shows that they are not 
doubly shortened
perfect codes? Can the structure of the two codes be generalized
or is there some construction that can be applied to them
to obtain an infinite family of similar codes?

As a final note we remark that if we relax the requirement
that the codes in this study be doubly shortened perfect codes of
length 15 and allow shortenings of \emph{any} perfect codes,
then a recent result by 
Avgustinovich and Krotov~\cite{AvKr09} shows that such
shortenings always exist.

\end{document}